\documentclass[sigconf]{acmart}

\newtheorem{theorem}{Theorem}
\newtheorem{assumption}{Assumption}
\usepackage{enumitem}
\usepackage{multirow}
\usepackage{graphicx}
\usepackage[normalem]{ulem}
\useunder{\uline}{\ul}{}

\AtBeginDocument{%
  }

\acmISBN{978-1-4503-XXXX-X/2018/06}

\begin{document}

\title{DrEM: Dual-Side Robust Ensemble Ranking from Noisy User Preference Predictions in Video Recommendation}

\author{Canwei Huang}
\authornote{Both authors contributed equally to this research.}
\email{2410103036@mails.szu.edu.cn}
\affiliation{%
  \institution{Shenzhen University}
  \city{Shenzhen}
  \country{China}
}

\author{Tiantian He}
\authornotemark[1]
\email{hetiantian05@kuaishou.com}
\affiliation{%
  \institution{Kuaishou Technology}
  \city{Beijing}
  \country{China}
}

\author{Xiaoxiao Xu}
\email{xuxiaoxiao05@kuaishou.com}
\affiliation{%
  \institution{Kuaishou Technology}
  \city{Beijing}
  \country{China}
}

\author{Jun Zhang}
\email{zhangjun08@kuaishou.com}
\affiliation{%
  \institution{Kuaishou Technology}
  \city{Beijing}
  \country{China}
}

\author{Ziran Deng}
\email{2510105033@mails.szu.edu.cn}
\affiliation{%
  \institution{Shenzhen University}
  \city{Shenzhen}
  \country{China}
}

\author{Weike Pan}
\email{panweike@szu.edu.cn}
\affiliation{%
  \institution{Shenzhen University}
  \city{Shenzhen}
  \country{China}
}

\author{Chunjie Chen}
\email{fengyu08@kuaishou.com}
\affiliation{%
  \institution{Kuaishou Technology}
  \city{Beijing}
  \country{China}
}

\author{Kaiqiao Zhan}
\email{zhankaiqiao@kuaishou.com}
\affiliation{%
  \institution{Kuaishou Technology}
  \city{Beijing}
  \country{China}
}

\renewcommand{\shortauthors}{Huang et al.}

\begin{abstract}
  Industrial video recommendation systems typically adopt a multi-stage architecture.
At the ensemble ranking stage, multi-dimensional user preference predictions (pxtrs) from an upstream multi-task model are fused into a unified ranking score to reflect user satisfaction.
Since users' true satisfaction is difficult to observe directly, ensemble ranking models commonly use pxtrs both as input features and as a source for constructing proxy preferences.
However, as outputs of an upstream prediction model, pxtrs inevitably contain prediction noise, which propagates to downstream learning across two sides.
On the supervision side, noisy pxtrs may flip proxy preferences and introduce erroneous gradients.
On the feature side, pxtr noise may propagate through model inputs and destabilize ranking scores.
Existing ensemble ranking methods typically treat pxtrs as reliable signals and overlook such prediction noise.
To address this, we propose DrEM, a \textbf{d}ual-side \textbf{r}obust \textbf{e}nse\textbf{m}ble ranking framework.
Our DrEM introduces a risk-denoising robust loss that corrects the empirical risk using estimated preference flip probability.
Meanwhile, it samples perturbations from the distribution of prediction noise and introduces a preference-preserving ranking consistency regularizer to improve feature-side output stability.
Theoretically, we obtain an approximate distribution of the prediction noise and prove that the robust loss remains superior under flip probability estimation error.
Extensive offline experiments and large-scale online A/B tests demonstrate the effectiveness and robustness of our DrEM.

\end{abstract}

\begin{CCSXML}
<ccs2012>
<concept>
<concept_id>10002951.10003317.10003347.10003350</concept_id>
<concept_desc>Information systems~Recommender systems</concept_desc>
<concept_significance>500</concept_significance>
</concept>
</ccs2012>
\end{CCSXML}

\ccsdesc[500]{Information systems~Recommender systems}

\keywords{Video Recommendation, Ensemble Ranking, Prediction Noise, Robust Learning}

\maketitle

\section{Introduction}
\label{sec:introduction}

In short-video recommendation, an ensemble ranking model ranks candidate items according to how well they match a user's preferences.
However, user satisfaction is difficult to observe because it is expressed through heterogeneous behaviors and cannot be fully characterized by a single ground-truth label.
Although posterior feedback (e.g., likes, comments, and follows) has clear behavioral semantics, it has two fundamental limitations.
Firstly, no actual feedback is available for unexposed items, whereas the ensemble ranking model must compare them all. 
Secondly, posterior signals are sparse and capture only specific aspects of user satisfaction, making them insufficient for multi-dimensional ranking.
Industrial systems therefore commonly adopt a two-stage pipeline.
An upstream multi-task model generates multi-dimensional predictions of a user's propensities (collectively referred to as pxtrs), including those for likes, comments, and others.
A downstream ensemble ranking model fuses these pxtrs into a unified ranking score. 
In this pipeline, pxtrs play a dual role: their relative values are used to construct proxy supervision, while the values themselves serve as input features for ranking inference~\cite{He2025EMER,Li2026EASQ,Shao2026UAME}.

Existing ensemble ranking methods have advanced the modeling capacity through learnable fusion weights~\cite{Li2023IntEL}, unsupervised objectives preserving pxtr ordinal and numerical information~\cite{Yu2024UREM}, and pairwise supervision derived from pxtr comparisons~\cite{He2025EMER,Li2026EASQ}.
However, they share the assumption that pxtrs are reliable signals, overlooking the prediction noise inherent in upstream model outputs~\cite{Kendall2018MTL}.
This noise propagates downstream to both sides, causing preference flips on the supervision side and output instability on the feature side.
Yet no existing method explicitly models the unique characteristics of this scenario.
The core challenge is the multi-dimensional heterogeneity of the prediction noise, which varies across pxtr tasks, across items, and consequently across pairs.
A uniform denoising strength would over-correct reliable pairs while under-correcting noisy ones, and a uniform perturbation scale would over-smooth confident items while leaving uncertain ones insufficiently protected.
Existing robustness methods cannot accommodate this heterogeneity for different reasons.
Denoising methods operate pointwise without modeling probabilities of preference flip~\cite{Chua2024UDT,He2024DCF}, while robust losses apply bounded functions uniformly across pairs without pair-level adaptation~\cite{Wu2024SSM,Yang2024PSL,Wu2024BSL,Yang2025SLK}, and perturbation methods either optimize loss smoothness rather than modeling prediction error~\cite{He2018APR,Zhang2024VAT} or use a global noise scale that ignores item-level heteroscedasticity~\cite{Ramazanli2024LSPR}.
Moreover, these methods target only one side, leaving the other side unprotected.

To address these limitations, both sides need to achieve sample-level differentiation while keeping their correction targets aligned.
On the supervision side, pair-level differentiation is essential because pairs with a small pxtr margin or large prediction noise are more likely to flip and require stronger correction, whereas the opposite holds for reliable pairs.
On the feature side, item-level differentiation is equally important because items with large prediction variance need stronger perturbations to improve robustness, while items with low variance should receive mild perturbations to avoid over-smoothing.
Furthermore, the two sides should share a consistent noise model so that their correction targets remain aligned.
Implementing this shared model requires estimating the noise distribution, which is feasible from data already available in the ensemble ranking scenario.
Observed user behaviors systematically deviate from the predicted pxtrs, and the magnitude of this deviation reflects the prediction noise.
By aggregating such deviations across large-scale data, the noise variance of each item can be estimated.
This estimation strategy satisfies both requirements, providing item-level and pair-level differentiation for each side and aligning their correction targets through shared parameters of the noise distribution.

Based on this analysis, we propose DrEM, a \textbf{d}ual-side \textbf{r}obust \textbf{e}nse\textbf{m}ble ranking framework.
On the supervision side, we derive approximate solutions to probabilities of preference flip based on the estimated noise distribution and construct a robust pairwise loss to correct the empirical risk toward the clean risk.
On the feature side, we sample perturbations from the distribution of prediction noise and introduce a preference-preserving ranking consistency regularizer to stabilize model outputs while avoiding conflict with the primary ranking objective.
Both components are derived from a shared noise model, aligning dual-side correction targets to the same noise source and synchronizing correction via shared parameters.

Our main contributions are summarized as follows:
\begin{itemize}[leftmargin=*]
\item We explicitly identify prediction noise arising from the dual role of pxtrs in industrial ensemble ranking for video recommendation, and systematically characterize how it propagates through both the supervision and feature sides.

\item We design a novel risk-denoising robust pairwise loss to correct the empirical risk by deriving approximate solutions to probabilities of preference flip.

\item We propose a novel preference-preserving ranking consistency regularizer using perturbations sampled from the distribution of prediction noise to stabilize model outputs.

\item We conduct extensive offline experiments on an industrial short-video recommendation platform, demonstrating the robustness of our DrEM across multiple tasks.
Online A/B tests further show significant gains on several business metrics.
\end{itemize}

\section{Related Work}
\label{sec:related_work}

\subsection{Ensemble Ranking for Short-Video Recommendation}

Early ensemble ranking approaches rely on manually designed heuristic formulas with RL-learned weights~\cite{Zhang2022BatchRLMTF}, motivating subsequent work to explore learnable fusion models~\cite{Cao2025xMTF,Cao2025Pantheon}.
IntEL uses behavioral labels as supervision to learn intent-aware, item-level personalized weights for pxtr fusion~\cite{Li2023IntEL}.
However, these labels are sparse and reflect only limited aspects of user satisfaction, providing insufficient information for complex fusion modeling.
UREM addresses this limitation with unsupervised learning objectives that preserve the ordinal and numerical information in pxtrs~\cite{Yu2024UREM}.
Recently, EMER derives pairwise supervision directly from pxtr comparisons, models interactions among candidate items with a Transformer~\cite{Vaswani2017Attention}, and uses a tailored loss to address the lack of a unified satisfaction label~\cite{He2025EMER}.
Turning to supervision sources, EASQ incorporates sparse questionnaire signals to directly align ranking with user satisfaction, complementing behavioral signals~\cite{Li2026EASQ}.
These methods improve ensemble ranking through advances in fusion mechanisms, supervision design, and signal sources, but leave pxtr prediction noise unmodeled.

\subsection{Learning with Noisy Labels}

Research on ranking under imperfect supervision has evolved along several lines.
Existing studies identify noisy implicit-feedback interactions from their losses and reduce their influence through reweighting, truncation, correction, or resampling~\cite{Wang2021ADT,He2024DCF,Zhang2025PLD}.
UDT later theoretically establishes a strict positive correlation between the relative magnitude of cross-entropy losses and both preference uncertainty and user inconsistency~\cite{Chua2024UDT}.
Another line focuses on robust loss design.
SSM shows that the sampled softmax loss mitigates popularity bias and facilitates hard-negative mining~\cite{Wu2024SSM}.
PSL attributes the softmax loss's false-negative sensitivity to its exponential function and replaces it with a bounded alternative, yielding an objective equivalent to robust optimization over the negative-sample distribution~\cite{Yang2024PSL}.
Other approaches use early memorization or cross-model agreement as denoising signals~\cite{Gao2022SGDL,Wang2022DeCA}, model pointwise label-flip probabilities to distinguish true negatives from mislabeled positives~\cite{Yu2020NBPO}, balance hard and false negatives with difficulty-aware or adversarial contrastive objectives~\cite{Yang2022HDCCF,Zhang2023AdvInfoNCE}, or reformulate pairwise learning as variational inference to unify preference alignment and denoising~\cite{Liu2026VarBPR}.
However, these methods target noise in observed posterior feedback or uncertainty in sampled negatives rather than pxtr prediction noise, and do not explicitly model pairwise preference flip probabilities.

\subsection{Perturbation-Based Regularization}

Feature-side robustness concerns model stability under input perturbations.
Adversarial training is a standard approach that trains models against worst-case perturbations through minimax optimization.
APR establishes this paradigm with globally fixed-magnitude perturbations~\cite{He2018APR}.
However, a uniform magnitude ignores differences in user vulnerability and may either hurt robust users or underprotect vulnerable ones.
VAT therefore adapts perturbation magnitude to user vulnerability~\cite{Zhang2024VAT}, while CascadeAT scales perturbations according to position-dependent cascade effects in sequential recommendation~\cite{Tan2024CascadeAT}.
Beyond magnitude, existing studies explore perturbation objectives, efficiency, and injection locations.
GaussAug generates perturbations via both embedding-centered and gradient-guided Gaussian sampling, then selects the one with the largest loss~\cite{Wang2023Gaussian}.
SharpCF uses a trajectory loss based on historical model states to reduce minimax overhead~\cite{Chen2023SharpCF}.
LSPR applies dropout to sparse and Gaussian noise to dense features with loss-balanced weighting~\cite{Ramazanli2024LSPR}, AMR perturbs deep image feature vectors~\cite{Tang2020AMR}, and RAT injects noise into hidden-layer outputs~\cite{Chen2024RAT}.
However, these methods either enforce loss smoothness rather than model prediction noise or use a global perturbation scale, and even some require historical training states unsuited to streaming training.

\section{Methodology}
\label{sec:method}

In this section, we study the ensemble ranking problem under noisy preference prediction.
We first formalize the problem and the notations (Section~\ref{sec:preliminaries}).
Built on a shared logit-space noise model, we then correct the supervision flip on the supervision side (Section~\ref{sec:label_side}) and keep the model output stable against input noise on the feature side (Section~\ref{sec:feature_side}).
Finally, we present the overall training objective (Section~\ref{sec:objective}).
Figure~\ref{fig:main} provides an overview of the proposed method.

\begin{figure*}[t]
  \centering
  \includegraphics[width=\textwidth]{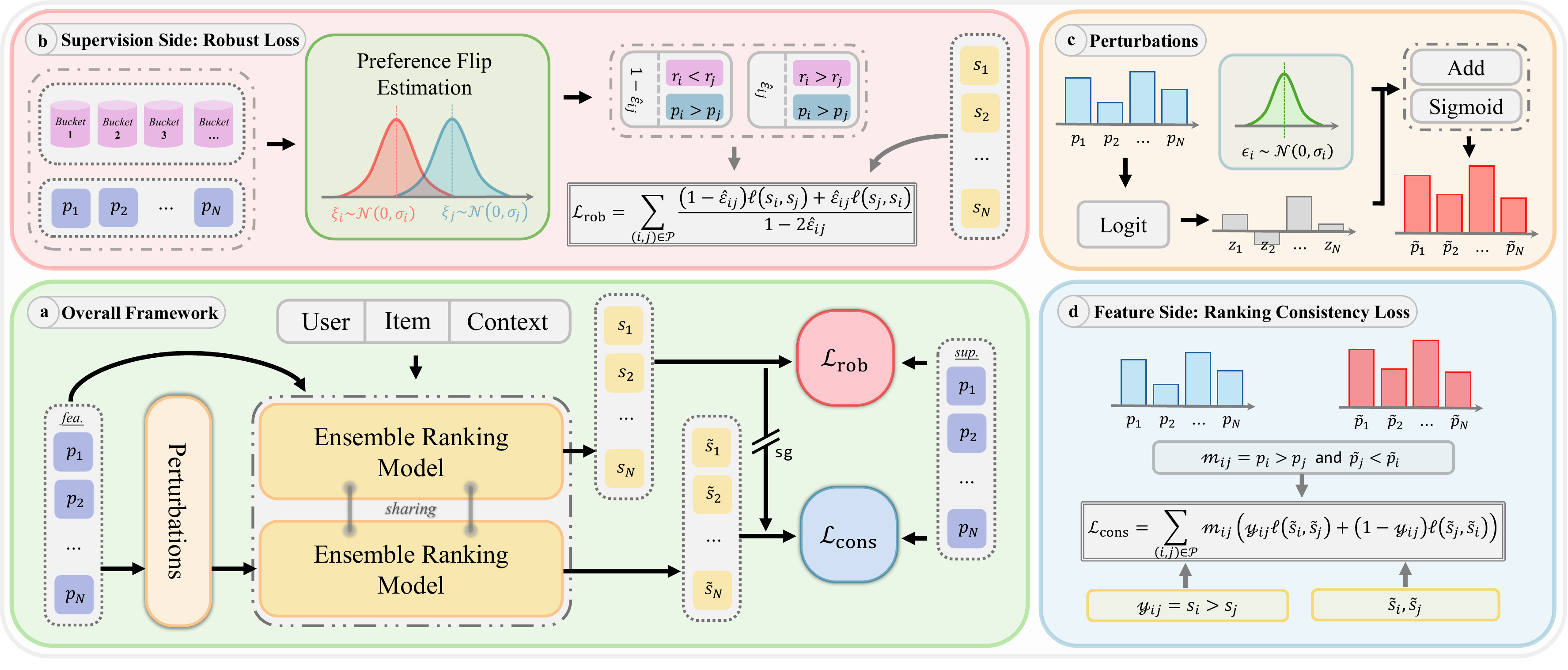}
  \caption{
    Overview of our DrEM.
    (a) The overall framework: pxtrs from the upstream multi-task model serve both as proxy supervision and as input features, and their prediction noise is corrected on both sides under a shared logit-space noise model.
    (b) The supervision side (Section~\ref{sec:label_side}), a risk-denoising robust loss that corrects the empirical risk pair by pair using the estimated flip probability $\hat\varepsilon_{ij}$.
    (c) The perturbation scheme (Section~\ref{sec:feature_side}): noise is added to the pxtr logits, and preference-preserving filtering keeps only the pairs whose order is unchanged.
    (d) The feature side (Section~\ref{sec:feature_side}), a ranking consistency regularizer that constrains the perturbed scores to match the original order on the retained pairs.}
  \label{fig:main}
\end{figure*}

\subsection{Preliminaries}
\label{sec:preliminaries}

Consider the candidate item set $\mathcal{I}$ of a single request.
For each candidate item $i \in \mathcal{I}$, the upstream multi-task model outputs predictions over multiple dimensions, collectively denoted as pxtrs.
Let $\mathcal{X}$ be the set of pxtr types, that is, the user behaviors predicted by the multi-task model.
For example, $\mathcal{X}=\{\mathrm{ltr}, \mathrm{ftr}, \mathrm{cmtr}, \dots\}$ corresponds to the predicted propensities of behaviors of liking, following, and commenting, etc.
We organize the pxtrs of all candidate items into a matrix $\mathbf{P}\in\mathbb{R}^{|\mathcal{I}|\times|\mathcal{X}|}$, where the entry $p_{i,x}\in[0,1]$ denotes the predicted value of item $i$ on pxtr type $x$.
For instance, $p_{i,\mathrm{ltr}}$ is the predicted like rate of item $i$.
Correspondingly, let $\mathbf{R}\in\mathbb{R}^{|\mathcal{I}|\times|\mathcal{X}|}$ be the latent clean pxtr matrix, whose entry $r_{i,x}$ is the latent noise-free value of $p_{i,x}$.

The ensemble ranking model $\mathrm{EM}_\theta(\cdot)$ takes three types of features as input.
These are the pxtr matrix $\mathbf{P}$ of the candidate items, user-side features such as user profiles and historical behavior sequences, and context features such as request scenario and time.
The goal of the ensemble ranking model is to make the output ranking scores $\mathbf{s}=(s_1,\dots,s_{|\mathcal{I}|})^\top\in\mathbb{R}^{|\mathcal{I}|}$ approximate the ranking induced by the true user satisfaction as closely as possible.
We define the preference pair set $\mathcal{S} = \{(i,j) \in \mathcal{I} \times \mathcal{I} \mid \mathrm{sat}_i > \mathrm{sat}_j\}$, where $\mathrm{sat}_i$ denotes the latent true satisfaction of the user toward item $i$.
The optimization objective is then
\begin{equation}
\label{eq:objective}
\theta^{*} = \arg\max_\theta \; \frac{1}{|\mathcal{S}|}\sum_{(i,j) \in \mathcal{S}} \delta(s_i, s_j), \quad \delta(s_i, s_j)=\begin{cases}1, & s_i > s_j,\\ 0, & s_i \le s_j,\end{cases}
\end{equation}
where $\delta(s_i,s_j)$ is a pairwise consistency indicator.

Since user satisfaction $\mathrm{sat}_i$ is not directly observable and lacks a unified label, pxtrs, as the predictions over multi-dimensional user behaviors, are used as proxy supervision signals.
For each pxtr type $x\in\mathcal{X}$, we construct a set of preference pairs from the relative order of the corresponding pxtrs, $\mathcal{P}_x=\{(i,j) \in \mathcal{I} \times \mathcal{I} \mid p_{i,x} > p_{j,x}\}$, which replaces $\mathcal{S}$ to guide the model toward learning the relative order of the ranking scores.
We carry out the subsequent derivation in logit space, where the prediction noise admits an additive Gaussian form under Assumption~\ref{asm:noise} in Section~\ref{sec:theory}.
Let $z_{i,x}=\mathrm{logit}(p_{i,x})$ and $z_{i,x}^*=\mathrm{logit}(r_{i,x})$ be the noisy and clean logit scores, collected into matrices $\mathbf{Z},\mathbf{Z}^*\in\mathbb{R}^{|\mathcal{I}|\times|\mathcal{X}|}$.
For brevity, the derivation below drops the type subscript $x$ and uses column vectors $\mathbf{z}=(z_1,\dots,z_{|\mathcal{I}|})^\top$ and $\mathbf{p}=(p_1,\dots,p_{|\mathcal{I}|})^\top$ for a single pxtr type.

\subsection{Supervision Side: A Risk-Denoising Robust Loss}
\label{sec:label_side}

Our method follows the noise-aware risk correction principle~\cite{Natarajan2013Learning,RayChowdhury2024Provably}.
When the observed supervision signal is a noisy version of a latent clean signal, the empirical risk should be corrected according to the corresponding noise mechanism.
In the scenario studied in this paper, the supervision signal is a pairwise preference induced by the upstream output pxtrs.
The noise mechanism therefore naturally manifests as a flip of the pairwise preference.
We accordingly instantiate the general principle above as a risk-denoising robust pairwise loss.

Let $(i^*,j^*)$ be a clean preference pair induced by the clean pxtr, that is, $r_{i^*}>r_{j^*}$.
Since the supervision used in training comes from the noisy predictions rather than clean pxtrs, the observed preference pair $(i,j)$ may flip relative to the clean pair.
It remains $(i^*,j^*)$ with probability $1-\varepsilon_{ij}$ and reverses to $(j^*,i^*)$ with flip probability $\varepsilon_{ij}$ (Theorem~\ref{thm:flip_prob} in Section~\ref{sec:theory}).
Taking the expectation over the flip randomness and the data distribution, the risk of the basic pairwise loss is
\begin{equation}
\label{eq:risk}
\mathcal{R} = \mathbb{E}_{(i^*,j^*)}\Big[(1-\varepsilon_{ij})\,\ell(s_{i^*}, s_{j^*}) + \varepsilon_{ij}\,\ell(s_{j^*}, s_{i^*})\Big].
\end{equation}
The extra reverse loss term $\varepsilon_{ij}\,\ell(s_{j^*}, s_{i^*})$ grows with the flip probability $\varepsilon_{ij}$ and drives the optimization gradient to deviate systematically from the clean objective.
This is the mechanism that the pxtr prediction noise may harm the ranking quality on the supervision side.

The weight of the reverse loss term is thus determined by the flip probability.
Intuitively, the flip probability depends on the logit margin and the noise distribution: a larger margin or a smaller noise yields a lower flip probability, and consequently the flip risk varies significantly across preference pairs.
Therefore, the correction should be performed pair by pair, relying on an estimate of the flip probability $\hat\varepsilon_{ij}$ of each pair $(i,j)$, rather than applying a globally uniform denoising strength.
The flip probability is computable from the noise distribution (Theorem~\ref{thm:flip_prob} in Section~\ref{sec:theory}), whose parameters can be estimated from large-scale posterior user feedback (Theorem~\ref{thm:variance} in Section~\ref{sec:theory}).

To correct the reverse loss term in the risk shown in Eq.~(\ref{eq:risk}), we derive the robust pairwise loss, which takes the form of a weighted combination of the forward and reverse pairwise losses,
\begin{equation}
\label{eq:rob_loss}
\mathcal{L}_{\text{rob}} = \sum_{(i,j)\in\mathcal{P}} \frac{(1 - \hat{\varepsilon}_{ij})\, \ell(s_i, s_j) - \hat{\varepsilon}_{ij}\, \ell(s_j, s_i)}{1 - 2\hat{\varepsilon}_{ij}},
\end{equation}
when $\hat{\varepsilon}_{ij}=\varepsilon_{ij}$, the risk of this loss equals the clean risk exactly.
In practice the flip probability can only be estimated rather than known exactly.
As long as $\hat{\varepsilon}_{ij}\in(0,1/2)$, the risk of the robust loss stays closer to the clean risk than that of the basic pairwise loss, and a more accurate estimate brings it even closer (Theorem~\ref{thm:superiority} in Section~\ref{sec:theory}). This property guarantees that the method remains effective even with imperfect flip probability estimates.

\subsection{Feature Side: A Preference-Preserving Ranking Consistency Regularizer}
\label{sec:feature_side}

The supervision-side correction in Section~\ref{sec:label_side} mitigates the effect of pxtr prediction noise on the supervision signal.
However, pxtrs also serve as input features to the ranking model, and their prediction noise may still cause unstable model outputs.
To keep the model output stable against such noise, we explicitly simulate the prediction noise that may exist in the pxtr inputs and constrain the model output accordingly.

For the pxtrs of a whole request input, we sample a perturbation vector $\boldsymbol{\epsilon} = (\epsilon_1,\dots,\epsilon_{|\mathcal{I}|})^\top$ from the distribution of prediction noise, where $\epsilon_i \sim \mathcal{N}(0,\sigma_i^2)$.
In logit space we construct the perturbed logit scores $\tilde{\mathbf{z}} = \mathbf{z} + \boldsymbol{\epsilon}$, where $\mathbf{z}=(z_1,\dots,z_{|\mathcal{I}|})^\top$ is the original logit score vector.
The sign of each component of $\boldsymbol{\epsilon}$ determines the perturbation direction, and its absolute value controls the perturbation strength.
The perturbed pxtrs are mapped back to probability space to obtain $\tilde{\mathbf{p}} = \sigma(\tilde{\mathbf{z}})$, where $\sigma(\cdot)$ is the element-wise sigmoid function.
This perturbation adopts the same logit-space noise model, with parameters $\sigma_i^2$ (Theorem~\ref{thm:variance} in Section~\ref{sec:theory}), as the flip probability derivation on the supervision side.

We construct a perturbed variant of the request input by replacing the original pxtrs with $\tilde{\mathbf{p}}$ while keeping the user-side and context features unchanged.
Given the original input and the perturbed input, the ensemble ranking model outputs the original ranking scores $\mathbf{s}\in\mathbb{R}^{|\mathcal{I}|}$ and the perturbed ranking scores $\tilde{\mathbf{s}}\in\mathbb{R}^{|\mathcal{I}|}$, respectively.

However, pxtrs themselves carry a strong ranking signal, and the perturbation may change the pairwise preference they express.
Suppose a pairwise order flips before and after the perturbation.
Forcing the perturbed scores to keep the original order would then make the model ignore the valid preference change in the perturbed features, and would conflict with the primary ranking objective.
To avoid this, we introduce a preference-preserving filtering mechanism.
It applies the ranking consistency constraint only to pairs whose preference is consistent before and after the perturbation.
Based on the pairwise preference indicator $\delta(\cdot,\cdot)$ defined in Section~\ref{eq:objective}, we denote the preferences before and after the perturbation as $\delta(p_i,p_j)$ and $\delta(\tilde{p}_i,\tilde{p}_j)$, and define the valid pair indicator as
\begin{equation}
m_{ij} = 1 - \big(\delta(p_i,p_j) \oplus \delta(\tilde{p}_i,\tilde{p}_j)\big),
\end{equation}
where $\oplus$ is the exclusive-or operation.
When the preference is consistent before and after perturbation, we have $m_{ij}=1$, and $m_{ij}=0$ otherwise.

Based on preference-preserving filtering, we define the pairwise ranking consistency loss as
\begin{equation}
\label{eq:cons_loss}
\mathcal{L}_{\text{cons}} = \sum_{(i,j)\in \mathcal{P}} m_{ij}\left( y_{ij}\,\ell(\tilde{s}_i, \tilde{s}_j) + (1 - y_{ij})\,\ell(\tilde{s}_j, \tilde{s}_i) \right),
\end{equation}
where $y_{ij}=\delta(s_i,s_j)$ is the relative preference expressed by the ranking scores under the original input, and $\ell(\cdot,\cdot)$ is the basic pairwise loss.
This loss takes the ranking scores under the original input as an anchor, and constrains the perturbed scores to keep a consistent order only on preference-preserving pairs.
It thus resists pxtr noise on the feature side while avoiding conflict with the primary ranking objective.

\subsection{Overall Optimization Objective}
\label{sec:objective}

Combining the risk-denoising robust loss in Eq.~(\ref{eq:rob_loss}) and the preference-preserving ranking consistency regularizer in Eq.~(\ref{eq:cons_loss}), the overall optimization objective is
\begin{equation}
\label{eq:overall_objective}
\mathcal{L} = \mathcal{L}_{\text{rob}} + \lambda_{\text{cons}}\mathcal{L}_{\text{cons}},
\end{equation}
where $\mathcal{L}_{\text{rob}}$ mitigates the proxy-pair flips caused by pxtr prediction noise from the supervision side, and $\mathcal{L}_{\text{cons}}$ improves the output stability of the model against pxtr prediction noise from the feature side.
Both share the same logit-space noise model and parameters, and jointly eliminate the effect of prediction noise on the supervision side and the feature side.

Compared with the basic pairwise loss, the extra cost of our DrEM mainly comes from the perturbed forward pass.
Other operations, such as the computation of preference flip probability and preference-preserving filtering, are scalar operations with negligible cost, and there is no additional cost at inference time.
Although our DrEM increases the training-time computation cost, training can be performed asynchronously and is decoupled from the latency-critical inference pipeline.

\section{Theoretical Analysis}
\label{sec:theory}

In this section, we provide theoretical foundations for the supervision-side robust loss and the feature-side perturbation design.
Both sides rely on a concrete characterization of prediction noise to yield tractable results.
We model the prediction noise as additive zero-mean Gaussian in logit space, which preserves probability boundedness, aligns with asymptotic normality for parametric models and approximately normal residuals for complex models~\cite{Zhigalskii2026Uncertainty}, and yields closed-form solutions via probit approximation.

\begin{assumption}[Additive Gaussian noise model]
\label{asm:noise}
For each item $i$, let $z_i = \mathrm{logit}(p_i)$ and $z_i^* = \mathrm{logit}(r_i)$ be the logits of the noisy and clean pxtrs, respectively.
They are related by
\begin{equation}
z_i = z_i^* + \xi_i,
\end{equation}
where $\xi_i \sim \mathcal{N}(0, \sigma_i^2)$ and $\xi_i \perp \xi_j$ for $i \neq j$.
\end{assumption}

Under the additive Gaussian noise model in Assumption~\ref{asm:noise}, the preference flip probability $\varepsilon_{ij}$ of a proxy pair $(i,j)$ cannot be observed directly and can only be approximately estimated as $\hat\varepsilon_{ij}$.
The estimation error prevents the robust loss from recovering the clean loss exactly, and its denoising effect varies with the estimation accuracy.
For a given $\hat\varepsilon_{ij}$, we characterize how much the robust loss improves over the basic pairwise loss.

\begin{theorem}[Risk superiority of the robust pairwise loss]
\label{thm:superiority}
Let the flip probability $\varepsilon_{ij}\in(0,1/2)$ and its estimate $\hat\varepsilon_{ij}\in(0,1/2)$.
Then the denoising effect of the robust pairwise loss improves monotonically as $|\hat\varepsilon_{ij}-\varepsilon_{ij}|$ decreases, and is strictly stronger than that of the basic pairwise loss.
When $\hat\varepsilon_{ij}=\varepsilon_{ij}$, the expected robust loss equals the clean loss exactly.
\end{theorem}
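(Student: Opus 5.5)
Fix a clean pair $(i^*,j^*)$ and abbreviate $a=\ell(s_{i^*},s_{j^*})$ (forward loss) and $b=\ell(s_{j^*},s_{i^*})$ (reverse loss). Write $\varepsilon=\varepsilon_{ij}$ and $\hat\varepsilon=\hat\varepsilon_{ij}$. The plan is to compute, conditionally on this pair, the expectation of each loss over the flip randomness described in Section~\ref{sec:label_side}. We then compare each expectation with the clean loss $a$, using the deviation $|\,\mathbb{E}[\cdot]-a\,|$ as the measure of the denoising effect. Since the data expectation in Eq.~(\ref{eq:risk}) is linear, the pointwise statements lift to the risks.

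\textbf{Expected robust loss.} With probability $1-\varepsilon$ the observed pair is $(i^*,j^*)$, and the summand of Eq.~(\ref{eq:rob_loss}) is $\frac{(1-\hat\varepsilon)a-\hat\varepsilon b}{1-2\hat\varepsilon}$. With probability $\varepsilon$ the pair is reversed, and the summand is $\frac{(1-\hat\varepsilon)b-\hat\varepsilon a}{1-2\hat\varepsilon}$. We take $\hat\varepsilon$ to be symmetric in the pair, since it depends only on $|z_i-z_j|$ and $\sigma_i^2+\sigma_j^2$. Collecting coefficients gives
\begin{equation*}
\mathbb{E}[\mathcal{L}_{\text{rob}}]=\frac{(1-\hat\varepsilon-\varepsilon)\,a+(\varepsilon-\hat\varepsilon)\,b}{1-2\hat\varepsilon}
= a+\frac{(\varepsilon-\hat\varepsilon)(b-a)}{1-2\hat\varepsilon}.
\end{equation*}
The last statement of the theorem follows immediately: at $\hat\varepsilon=\varepsilon$ the expectation equals $a$. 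By the same computation, the basic loss has $\mathbb{E}[\ell]=(1-\varepsilon)a+\varepsilon b=a+\varepsilon(b-a)$.

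\textbf{Comparison.} The robust deviation is $D_{\text{rob}}=\frac{|\varepsilon-\hat\varepsilon|}{1-2\hat\varepsilon}\,|b-a|$ and the basic deviation is $D_{\text{base}}=\varepsilon\,|b-a|$. Two claims remain.

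\emph{Monotonicity.} First fix $\hat\varepsilon$. Then $D_{\text{rob}}$ is proportional to $|\hat\varepsilon-\varepsilon|$, so it decreases as this gap shrinks. If $\varepsilon$ is fixed and $\hat\varepsilon$ moves, the map $\hat\varepsilon\mapsto(\varepsilon-\hat\varepsilon)/(1-2\hat\varepsilon)$ has derivative $(2\varepsilon-1)/(1-2\hat\varepsilon)^2<0$. It is therefore strictly decreasing on $(0,1/2)$ and vanishes at $\hat\varepsilon=\varepsilon$, so its absolute value is monotone in $|\hat\varepsilon-\varepsilon|$ on each side of $\varepsilon$.

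\emph{Strict superiority.} We need $|\varepsilon-\hat\varepsilon|<\varepsilon(1-2\hat\varepsilon)$.
\begin{itemize}
\item If $\hat\varepsilon\le\varepsilon$, the condition reads $\varepsilon-\hat\varepsilon<\varepsilon-2\varepsilon\hat\varepsilon$, i.e. $\hat\varepsilon(1-2\varepsilon)>0$. This holds since $\hat\varepsilon>0$ and $\varepsilon<1/2$.
\item If $\hat\varepsilon>\varepsilon$, the condition reads $\hat\varepsilon-\varepsilon<\varepsilon-2\varepsilon\hat\varepsilon$, i.e. $\hat\varepsilon(1+2\varepsilon)<2\varepsilon$, i.e. $\hat\varepsilon<\frac{2\varepsilon}{1+2\varepsilon}$.
\end{itemize}

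\textbf{Main obstacle.} The second case does not hold for all $\hat\varepsilon\in(0,1/2)$. For example, with $\varepsilon$ small and $\hat\varepsilon$ near $1/2$, the robust deviation blows up like $1/(1-2\hat\varepsilon)$. The theorem as literally stated therefore appears to need qualification. I would state superiority under the explicit condition $\hat\varepsilon<2\varepsilon/(1+2\varepsilon)$, which includes all underestimates. Alternatively, one could interpret ``the denoising effect'' as the weight on the erroneous reverse term, measured before the $1/(1-2\hat\varepsilon)$ normalization, i.e. $|\varepsilon-\hat\varepsilon|$ versus $\varepsilon$. This gives the cleaner condition $\hat\varepsilon<2\varepsilon$, which again is not automatic. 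I would try the normalized version first, and record the sufficient condition $\hat\varepsilon\le\varepsilon$ (or the bound above) as the hypothesis under which ``strictly stronger'' holds. The degenerate case $a=b$ makes both deviations vanish, so strictness is claimed only when $a\neq b$.
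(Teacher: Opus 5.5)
Your route is the paper's. The paper writes the expected robust loss as $c(\hat\varepsilon)\,a+(1-c(\hat\varepsilon))\,b$ with $c(\hat\varepsilon)=\frac{1-\varepsilon-\hat\varepsilon}{1-2\hat\varepsilon}$. It takes closeness of $c$ to $1$ as the measure of denoising and compares with $c(0)=1-\varepsilon$. Your $D_{\text{rob}}=|1-c(\hat\varepsilon)|\,|b-a|$ and $D_{\text{base}}=|1-c(0)|\,|b-a|$ are exactly that criterion, so your normalized version is the right one. Drop the unnormalized variant: it compares reverse weights while ignoring that the clean weight has also shrunk to $1-\varepsilon-\hat\varepsilon$.

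The obstacle you identify is real, and the paper's proof does not resolve it; it overlooks it. The paper shows $c(\hat\varepsilon)-c(0)=\hat\varepsilon(1-2\varepsilon)/(1-2\hat\varepsilon)>0$ and $c'(\hat\varepsilon)=(1-2\varepsilon)/(1-2\hat\varepsilon)^2>0$. It then reads ``larger clean coefficient'' as ``closer to $1$''. That step is valid only while $c(\hat\varepsilon)\le 1$, i.e.\ $\hat\varepsilon\le\varepsilon$; for overestimates $c$ exceeds $1$ and diverges as $\hat\varepsilon\to 1/2$.

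\emph{Superiority fails.} Take $\varepsilon=0.1$, $\hat\varepsilon=0.4$. Then $c=2.5$, so $|1-c|=1.5>0.1=|1-c(0)|$. Since $2\varepsilon/(1+2\varepsilon)\in(\varepsilon,1/2)$ for every $\varepsilon\in(0,1/2)$, such failures occur for every $\varepsilon$.

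\emph{Monotonicity fails across sides,} as your ``on each side'' caveat anticipates. With $\varepsilon=0.2$, $\hat\varepsilon=0.01$ gives $|1-c|\approx 0.194$. But $\hat\varepsilon=0.38$, which is closer to $\varepsilon$, gives $|1-c|=0.75$.

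Reading the denoising effect as the size of $c$ itself would rescue superiority for all $\hat\varepsilon$, but it breaks monotonicity for $\hat\varepsilon>\varepsilon$. So no reading makes the statement true on all of $(0,1/2)$, and you should not hunt for a trick. The theorem you outline is the correct one:
\begin{itemize}
\item exactness at $\hat\varepsilon=\varepsilon$, unconditionally;
\item monotonicity in $|\hat\varepsilon-\varepsilon|$ on each side of $\varepsilon$;
\item strict superiority if and only if $\hat\varepsilon<2\varepsilon/(1+2\varepsilon)$ (with $a\ne b$).
\end{itemize}
The last condition covers all underestimates, where both readings coincide.

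One caveat on your opening remark about lifting to risks. Linearity of the outer expectation transfers the equality case and the coefficient formulas to the risks. It does not transfer the comparison of absolute deviations, because $b-a$ can change sign across pairs. The basic loss's aggregate bias $\mathbb{E}[\varepsilon_{ij}(b-a)]$ can then cancel while the robust one does not, even with all $\hat\varepsilon_{ij}<\varepsilon_{ij}$. Keep the superiority claim per pair, as the paper does.
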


\begin{proof}
Consider a single clean preference pair $(i^*,j^*)$ with $r_{i^*}>r_{j^*}$.
The observed pair $(i,j)$ flips to $(j^*,i^*)$ with probability $\varepsilon_{ij}$ and remains $(i^*,j^*)$ with probability $1-\varepsilon_{ij}$, so the expectation of a pairwise loss $\ell$ is
\begin{equation}
\mathbb{E}_{(i^*,j^*)}[\ell] = (1-\varepsilon_{ij})\,\ell(s_{i^*}, s_{j^*}) + \varepsilon_{ij}\,\ell(s_{j^*}, s_{i^*}).
\end{equation}
Substituting the definition of the robust loss, the coefficient of the clean term $\ell(s_{i^*}, s_{j^*})$ in $\mathbb{E}_{(i^*,j^*)}[\mathcal{L}_{\text{rob}}]$ is
\begin{equation}
c(\hat\varepsilon_{ij}) = \frac{1-\varepsilon_{ij}-\hat\varepsilon_{ij}}{1-2\hat\varepsilon_{ij}},
\end{equation}
and the coefficient of the reverse term $\ell(s_{j^*}, s_{i^*})$ is $1-c(\hat\varepsilon_{ij})$.
In the expectation of either loss, the two coefficients sum to 1, so the denoising effect is directly determined by $c(\hat\varepsilon_{ij})$, and the closer $c(\hat\varepsilon_{ij})$ is to 1, the stronger the denoising effect.
The basic pairwise loss corresponds to $\hat\varepsilon_{ij}=0$, that is, $c(0)=1-\varepsilon_{ij}$.

Since $c(\hat\varepsilon_{ij})-c(0)>0$ holds for all $\varepsilon_{ij},\hat\varepsilon_{ij}\in(0,1/2)$, the clean-term coefficient of the robust loss is strictly larger than that of the basic pairwise loss, and its denoising effect is stronger.
Moreover, since $c'(\hat\varepsilon_{ij})>0$, $c(\hat\varepsilon_{ij})$ is strictly increasing in $\hat\varepsilon_{ij}$ on $(0,1/2)$ with $c(\varepsilon_{ij})=1$.
Hence the closer $\hat\varepsilon_{ij}$ is to $\varepsilon_{ij}$, the closer $c(\hat\varepsilon_{ij})$ is to 1 and the stronger the denoising effect.
When $\hat\varepsilon_{ij}=\varepsilon_{ij}$, we have $c(\hat\varepsilon_{ij})=1$, the reverse-term coefficient is zero, and $\mathbb{E}_{(i^*,j^*)}[\mathcal{L}_{\text{rob}}]=\ell(s_{i^*}, s_{j^*})$ is exactly the clean loss.
\end{proof}

The denoising effect of the robust loss depends on the accuracy of $\hat\varepsilon_{ij}$.
The problem thus reduces to estimating the flip probability accurately.
Under the additive logit noise model, the flip probability admits an approximate solution derived from the parameters of noise distribution (Theorem~\ref{thm:flip_prob}).

\begin{theorem}[Approximate solution to the preference flip probability]
\label{thm:flip_prob}
Under Assumption~\ref{asm:noise}, the preference flip probability of a proxy pair $(i, j)\in\mathcal{P}$ is estimated by
\begin{equation}
\hat{\varepsilon}_{ij} = \Phi\left(-\frac{z_i - z_j}{\sqrt{\sigma_i^2 + \sigma_j^2}}\right),
\end{equation}
where $\Phi(\cdot)$ is the cumulative distribution function of the standard normal distribution.
\end{theorem}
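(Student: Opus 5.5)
The plan is to compute the flip probability exactly under Assumption~\ref{asm:noise} and then replace the unobservable clean margin by its observed counterpart; that replacement is the only approximation, and it is what makes the formula an estimate. Fix a proxy pair $(i,j)\in\mathcal{P}$, so $p_i>p_j$. Because the logit is strictly increasing, this is equivalent to $z_i>z_j$. A flip means the clean order disagrees with the observed one, i.e.\ $r_i<r_j$, or equivalently $z_i^*<z_j^*$. By Assumption~\ref{asm:noise}, $z_i^*-z_j^*=(z_i-z_j)-(\xi_i-\xi_j)$. Since $\xi_i$ and $\xi_j$ are independent zero-mean Gaussians, their difference satisfies $\xi_i-\xi_j\sim\mathcal{N}(0,\sigma_i^2+\sigma_j^2)$.

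The first step takes the view in which the clean logits are fixed and the noise is random. Then the flip event is $\{z_j^*-z_i^* > 0\}$ conditioned on the observed ordering. To get a closed form, I would treat the observed margin $\Delta_{ij}=z_i-z_j$ as fixed and regard the clean margin as noisy around it. Under the symmetric Gaussian noise with a flat prior on $z^*$, the posterior of the clean margin is $z_i^*-z_j^*\mid \mathbf{z}\sim\mathcal{N}(\Delta_{ij},\sigma_i^2+\sigma_j^2)$. This is the standard flat-prior Gaussian posterior: the likelihood $\mathcal{N}(z;z^*,\sigma^2)$ is symmetric in $z$ and $z^*$. It follows that
\[
\Pr(z_i^*<z_j^*\mid \mathbf{z})=\Pr\!\left(\mathcal{N}(0,1)<-\frac{\Delta_{ij}}{\sqrt{\sigma_i^2+\sigma_j^2}}\right)=\Phi\!\left(-\frac{z_i-z_j}{\sqrt{\sigma_i^2+\sigma_j^2}}\right).
\]
As an alternative justification, I would note the plug-in route. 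The exact forward flip probability given the clean pair is $\Pr(\xi_j-\xi_i>z_i^*-z_j^*)=\Phi\big(-(z_i^*-z_j^*)/\sqrt{\sigma_i^2+\sigma_j^2}\big)$. Substituting the observed logits for the unknown clean logits, which is the natural unbiased plug-in since $\mathbb{E}[z_i-z_j]=z_i^*-z_j^*$, yields the same $\hat\varepsilon_{ij}$.

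The main obstacle is making precise in what sense this is an "estimate", since the exact flip probability depends on the unobserved $z^*$. I would handle it by stating the flat-prior (noninformative) posterior interpretation explicitly and noting that it is the one consistent with the risk formula in Eq.~(\ref{eq:risk}). I would also sanity-check the result: $\hat\varepsilon_{ij}\in(0,1/2)$ whenever $z_i>z_j$, which meets the hypothesis of Theorem~\ref{thm:superiority}, and $\hat\varepsilon_{ij}$ decreases in the margin and increases in the noise variances, matching the intuition given in Section~\ref{sec:label_side}.

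The remark about the probit approximation concerns only probability space. If one instead models the noise on the sigmoid outputs, one can invoke $\sigma(a)\approx\Phi(\sqrt{\pi/8}\,a)$. Because the assumption is stated directly in logit space, however, no such approximation is needed here.
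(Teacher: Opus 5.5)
Your proposal is correct, but your main argument takes a different route from the paper's. The paper argues frequentistically. It treats the clean margin $\mu_{ij}=z_i^*-z_j^*$ as an unknown parameter and shows from the Gaussian likelihood of $d_{ij}=z_i-z_j$ that $d_{ij}$ is the MLE of $\mu_{ij}$. It then notes that the true flip rate $\varepsilon_{ij}$ depends on $\mu_{ij}$ alone, and invokes MLE invariance to conclude that $\hat\varepsilon_{ij}$ is the MLE of $\varepsilon_{ij}$. Your secondary plug-in route is this argument in substance, but justifying it by unbiasedness of $z_i-z_j$ is weaker. Unbiasedness does not pass through $\Phi$, and it is lost once you condition on $z_i>z_j$; invariance is the cleaner reason for the substitution. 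Your primary flat-prior posterior route is genuinely different, and it is rigorous as a computation. It gives an exact probability statement rather than a point estimate, at the cost of an improper prior. Note, however, that $\Pr(z_i^*<z_j^*\mid\mathbf{z})$ is a backward (posterior) flip probability, the chance that the clean order disagrees with a given observed pair. Eq.~(\ref{eq:risk}) and the correction in Theorem~\ref{thm:superiority} use instead the forward rate $\varepsilon_{ij}$, the chance that noise reverses a fixed clean pair, which depends on the unobserved clean margin. The two coincide numerically only because the Gaussian location model is symmetric in $z$ and $z^*$. Under a flat prior, the backward probability therefore equals the forward-rate function evaluated at the observed margin. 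So your remark that the posterior reading is the one consistent with Eq.~(\ref{eq:risk}) has it the wrong way round. It is the estimate-of-$\varepsilon_{ij}$ reading (your plug-in or the paper's MLE) that ties the formula to the risk correction. Also drop the sentence that fixes $z^*$ and calls $\{z_j^*-z_i^*>0\}$ the random flip event, since with $z^*$ fixed that event is deterministic.
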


\begin{proof}
Under Assumption~\ref{asm:noise}, the logit margin $d_{ij} := z_i - z_j$ follows a normal distribution with location parameter $\mu_{ij}$ and variance $\sigma_i^2 + \sigma_j^2$, where $\sigma_i^2$ and $\sigma_j^2$ are the known estimates from Theorem~\ref{thm:variance}.
The proxy pair $(i,j)$ flips if and only if the noise reverses the sign of $d_{ij}$, so the flip probability $\varepsilon_{ij}$ is the standard normal tail probability of the standardized $\mu_{ij}$, that is, $\varepsilon_{ij}$ is a monotone measurable function of $\mu_{ij}$.

Given the observation $d_{ij}$, the log-likelihood of $\mu_{ij}$ is
\begin{equation}
\ell(\mu_{ij}) = -\frac{(d_{ij} - \mu_{ij})^2}{2(\sigma_i^2 + \sigma_j^2)} + C.
\end{equation}
By the factorization theorem, $d_{ij}$ is a sufficient statistic for $\mu_{ij}$.
Taking the derivative of $\ell$ with respect to $\mu_{ij}$ and setting it to zero gives the solution $\hat\mu_{ij} = d_{ij}$, and the second derivative $-(\sigma_i^2+\sigma_j^2)^{-1} < 0$ confirms that it is a maximum, so $d_{ij}$ is the MLE of $\mu_{ij}$.

By the invariance property of the MLE, the MLE of any measurable function of a parameter equals the value of that function at the MLE of the parameter.
As a measurable function of $\mu_{ij}$, the MLE of $\varepsilon_{ij}$ is the value of that function at $\hat\mu_{ij} = d_{ij}$.
Expanding it yields the conclusion.
\end{proof}

The approximate flip probability in Theorem~\ref{thm:flip_prob} still depends on the noise variance $\sigma_i^2$, which cannot be obtained from a single observation.
However, the convolution of the sigmoid and the Gaussian noise produces a systematic difference between the posterior behavior and the predicted rate.
This difference encodes the information of the variance and can be extracted through a bucketing aggregation strategy (Theorem~\ref{thm:variance}).

\begin{theorem}[Probit estimation of the logit-space noise variance]
\label{thm:variance}
Under Assumption~\ref{asm:noise}, suppose all items in bucket $B_k$ share the same logit-space noise variance $\sigma_k^2$, and let $\mathrm{diam}(B_k) = \max_{i,j \in B_k}|z_i - z_j|$.
Then when $|B_k| \to \infty$ and $\mathrm{diam}(B_k) \to 0$, we have
\begin{equation}
\hat{\sigma}_k^2 = \frac{1}{\lambda^2}\left[\left(\frac{\Phi^{-1}(\mathbb{E}_{B_k}[p_i])}{\Phi^{-1}(\mathbb{E}_{B_k}[y_i])}\right)^2 - 1\right] \rightarrow \sigma_k^2, \quad \lambda = \sqrt{\pi/8}.
\end{equation}
\end{theorem}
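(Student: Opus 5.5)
The plan is to interpret $y_i$ as a Bernoulli posterior behavior label with $\mathbb{E}[y_i\mid z_i^*]=r_i=\sigma(z_i^*)$. The predicted rate is $p_i=\sigma(z_i)=\sigma(z_i^*+\xi_i)$. Within a bucket, the two aggregated quantities $\mathbb{E}_{B_k}[p_i]$ and $\mathbb{E}_{B_k}[y_i]$ should then converge to two different functionals of a common location. Comparing their probit transforms isolates $\sigma_k^2$.

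The tool throughout is the probit approximation $\sigma(t)\approx\Phi(\lambda t)$ with $\lambda=\sqrt{\pi/8}$. Its Gaussian-convolution consequence is
\begin{equation}
\mathbb{E}_{\xi\sim\mathcal{N}(0,\sigma^2)}[\Phi(\lambda(a+\xi))]=\Phi\!\left(\frac{\lambda a}{\sqrt{1+\lambda^2\sigma^2}}\right),
\end{equation}
which is exact for $\Phi$. I will take this as the approximation regime in which the theorem holds, consistent with the paper's own framing.

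The steps, in order, are as follows.

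\textbf{Step 1: fix the conditioning.} Buckets are formed on the observed logits $z_i$. As $\mathrm{diam}(B_k)\to0$, all $z_i$ in $B_k$ tend to a common value $\bar z_k$, so $\mathbb{E}_{B_k}[p_i]\to\sigma(\bar z_k)\approx\Phi(\lambda\bar z_k)$. This gives $\Phi^{-1}(\mathbb{E}_{B_k}[p_i])\to\lambda\bar z_k$.

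\textbf{Step 2: express the label mean through the noise.} Conditionally on $z_i$, the clean logit is $z_i^*=z_i-\xi_i$. Adopting the same noise-convolution view as the paper, I will treat the label mean as $\mathbb{E}[\sigma(z_i-\xi_i)]$ with $\xi_i\sim\mathcal{N}(0,\sigma_k^2)$. By the displayed identity this is approximately $\Phi(\lambda\bar z_k/\sqrt{1+\lambda^2\sigma_k^2})$. As $|B_k|\to\infty$, the law of large numbers (bounded i.i.d.\ Bernoulli terms) gives $\mathbb{E}_{B_k}[y_i]\to$ this value. Taking probits yields $\Phi^{-1}(\mathbb{E}_{B_k}[y_i])\to\lambda\bar z_k/\sqrt{1+\lambda^2\sigma_k^2}$.

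\textbf{Step 3: form the ratio.} The ratio of the two probits tends to $\sqrt{1+\lambda^2\sigma_k^2}$. Squaring, subtracting $1$ and dividing by $\lambda^2$ gives $\sigma_k^2$. Continuity of $\Phi^{-1}$ on $(0,1)$ and of the algebraic map, together with the continuous mapping theorem, turns the two limits into $\hat\sigma_k^2\to\sigma_k^2$. This requires $\bar z_k\neq0$ so the ratio is defined; I will state this as a side condition.

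The main obstacle is Step 2, namely justifying the conditional law of $z_i^*$ given $z_i$. Assumption~\ref{asm:noise} specifies $z_i\mid z_i^*$, and inverting it to $z_i^*\mid z_i\sim\mathcal{N}(z_i,\sigma_k^2)$ strictly needs a flat prior on $z^*$. I would first try to argue this as the intended modeling convention, since the paper describes the deviation as a convolution of the sigmoid with Gaussian noise. Failing that, I would note that the sign of $\xi$ is irrelevant by symmetry and present the result as holding under that convention. I would also flag the probit step as an approximation rather than an exact identity.
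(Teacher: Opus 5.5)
Your proposal is correct and follows essentially the same route as the paper: the probit approximation $\sigma(t)\approx\Phi(\lambda t)$, the Gaussian-convolution identity applied to $\mathbb{E}[\sigma(z_i\pm\xi_i)]$ with $z_i$ held fixed, the law of large numbers within the bucket, vanishing diameter so that the $p_i$ become uniform, and then solving the probit ratio for $\sigma_k^2$. The paper uses the same conditioning convention you flag, writing $\mathbb{E}[r_i]=\mathbb{E}[\sigma(z_i+\xi_i)]$ directly, and it does not mention the degenerate case $\bar z_k=0$; your explicit caveats on both points, and on the probit step being an approximation, refine its argument rather than depart from it.
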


\begin{proof}
By definition, $\mathbb{E}[r_i] = \mathbb{E}[\sigma(z_i + \xi_i)]$.
When $\sigma_i^2 = 0$, we have $\mathbb{E}[r_i] = p_i$.
When $\sigma_i^2 > 0$, by Jensen's inequality, $\mathbb{E}[r_i] \neq p_i$.
By the probit approximation of the sigmoid $\sigma(x) \approx \Phi(\lambda x)$,
\begin{equation}
\mathbb{E}[r_i] \approx  \Phi\left(\frac{\lambda z_i}{\sqrt{1 + \lambda^2\sigma_i^2}}\right) = \Phi\left(\frac{\Phi^{-1}(p_i)}{\sqrt{1 + \lambda^2 \sigma_i^2}}\right).
\end{equation}
Since $\mathbb{E}[y_i \mid z_i]$ is the conditional expectation of $y_i$, it cannot be estimated from a single observation and needs to be aggregated within a bucket.
Given $z_i$, the $y_i$ are independent $\text{Bernoulli}(r_i)$, so by the law of large numbers, when $|B_k| \to \infty$ we have $\mathbb{E}_{B_k}[y_i] =\mathbb{E}_{B_k}[r_i]$.
When $\mathrm{diam}(B_k) \to 0$, the $p_i$ within the bucket become uniform, and
\begin{equation}
\mathbb{E}_{B_k}[y_k] =\Phi\left(\frac{\Phi^{-1}(p_k)}{\sqrt{1 + \lambda^2 \sigma_k^2}}\right).
\end{equation}
Solving for $\sigma_k^2$ yields the result.
\end{proof}

\section{Experiments}
\label{sec:experiments}

In this section, we conduct extensive offline and online experiments to evaluate the effectiveness of our DrEM and answer the following research questions: (1) can our DrEM maintain superior ranking performance over the existing methods under instability in the input pxtr features, (2) how do the contributions of the supervision-side correction and feature-side regularizer vary across different perturbation strengths, (3) can our DrEM improve business metrics in online A/B tests, and (4) how does the distribution of pairwise preference flip probabilities explain the effectiveness of our method.

\subsection{Experimental Setup}
\label{sec:experimental_setup}

\subsubsection{Dataset}
\label{sec:dataset}

All offline experiments are conducted on real-world data from a large-scale industrial short-video social platform with hundreds of millions of DAUs, where the upstream multi-task model predicts pxtrs for multiple user behaviors, such as clicks, views, long views, likes, follows, comments, and forwards.
Note that public short-video datasets~\cite{Ni2025MicroLens,Gao2022KuaiRec,Sun2023KuaiSAR} only contain posterior feedback labels without upstream pxtr outputs, which thus cannot be used since our DrEM exploits pxtrs both as pairwise supervision signals and input features.

To evaluate the robustness of our DrEM to the pxtr prediction noise, we inject Gaussian perturbations into the input pxtrs to construct evaluation sets with different perturbation strengths.
Specifically, for each pxtr, the perturbed logit score is computed as
$\tilde{z}_i = z_i + \alpha \epsilon_i,$
where $\epsilon_i$ denotes a sampled Gaussian perturbation and $\alpha$ controls the perturbation strength.
We set $\alpha$ to $0.5$, $1.0$, and $2.0$, corresponding to mild, moderate, and severe pxtr prediction noise, respectively.

\subsubsection{Evaluation Metric}
\label{sec:evaluation_metric}

We use Group AUC (GAUC) as the primary offline metric.
In industrial ensemble ranking, the model output is a single scalar score per item and the final ranking is determined by sorting these scores.
Ranking consistency with the reference order is therefore the central criterion for model quality~\cite{He2025EMER,Cao2025Pantheon,Li2026EASQ}.
GAUC further improves upon global AUC by computing per-user AUC and weighting by impression count, thereby eliminating cross-user confounding and better reflecting personalized ranking performance.
It is computed as follows:
\begin{equation}
\mathrm{GAUC}
=
\frac{
\sum_{i}^{\#\mathrm{user}}
\#\mathrm{impression}_{u_i}\cdot\mathrm{AUC}_{u_i}
}{
\sum_{i}\#\mathrm{impression}_{u_i}
},
\end{equation}
where $u_i$ denotes the $i$-th user, $\#\mathrm{impression}_{u_i}$ is the number of impressions, and $\mathrm{AUC}_{u_i}$ is the corresponding AUC value.

On each perturbed evaluation set, we compute GAUC using the unperturbed pxtr ordering as the reference for the model ranking scores.
A smaller drop in GAUC as the perturbation strength increases indicates greater robustness to pxtr prediction noise.

\subsubsection{Baselines}
\label{sec:baselines}

We use two state-of-the-art ensemble ranking models, EMER~\cite{He2025EMER} and EASQ~\cite{Li2026EASQ}, as backbones, and compare with four robust learning baselines: SSM~\cite{Wu2024SSM} and PSL~\cite{Yang2024PSL} on the supervision side, and GaussAug (GA)~\cite{Wang2023Gaussian} and LSPR~\cite{Ramazanli2024LSPR} on the feature side (see Appendix~\ref{app:baselines} for details).
These baselines operate on either supervision-side losses or feature-side perturbations, motivating joint intervention under a shared noise model.
For ablation, we also evaluate DrEM:S and DrEM:F, which represent the supervision-side-only and feature-side-only variants of our method, respectively.

\subsubsection{Implementation Details}
\label{sec:implementation_details}

We use the same settings across different methods for a fair comparison.
Following~\cite{He2025EMER}, we use embedding dimensions of 32 for user and item features and 8 for pxtrs, and train the model with the Adam optimizer~\cite{Kingma2015Adam} at a learning rate of $5\times10^{-6}$.
We tune the consistency weight $\lambda_{\mathrm{cons}}$ in Eq.~(\ref{eq:overall_objective}) over $\{0.1,0.3,0.6\}$ to prevent the consistency regularizer from overwhelming the main ranking loss.

\subsection{Overall Performance (RQ1)}
\label{sec:overall_performance}

To answer RQ1, we compare all methods on both the EMER and EASQ backbones at a fixed perturbation strength $\alpha=1.0$.
Table~\ref{tab:overall} reports the GAUC of each method on several pxtr tasks (see Appendix~\ref{app:task_details} for task details), where we have the following observations:

\begin{table}[t]
\centering
\caption{Overall comparison at $\alpha\!=\!1.0$. Best in \textbf{red}, second best \underline{blue}. All GAUC values share the prefix ``0.''.}
\label{tab:overall}
\setlength{\tabcolsep}{4pt}%
\begin{tabular}{c|ccccccc}
\toprule
Backbone & \multicolumn{7}{c}{EMER} \\
\cmidrule(lr){1-1} \cmidrule(lr){2-8}
Task & pctr & pvtr & plvtr & pltr & pwtr & pcmtr & pftr \\
\midrule
Base   & .6530  & .6899 & .6851 & .6622 & .6775 & .6235 & .6853 \\
GA     & .6506 & .6930  & .6833 & .6630  & .6787 & .6227 & .6876 \\
LSPR   & .6549 & .6918 & .6863 & .6646 & .6789 & .6257 & .6860  \\
PSL    & .6499 & .6895 & .6830  & {\color[HTML]{0070C0} \underline{.6673}} & .6833 & {\color[HTML]{0070C0} \underline{.6287}} & .6886 \\
SSM    & .6516 & .6882 & .6832 & .6633 & .6781 & .6224 & .6853 \\
DrEM:F & {\color[HTML]{FF0000} \textbf{.6570}} & {\color[HTML]{0070C0} \underline{.6947}} & {\color[HTML]{FF0000} \textbf{.6877}} & .6653 & .6814 & .6260  & .6870  \\
DrEM:S & .6531 & .6934 & .6863 & .6662 & {\color[HTML]{FF0000} \textbf{.6936}} & .6286 & {\color[HTML]{0070C0} \underline{.6931}} \\
DrEM   & {\color[HTML]{0070C0} \underline{.6563}} & {\color[HTML]{FF0000} \textbf{.6966}} & {\color[HTML]{0070C0} \underline{.6876}} & {\color[HTML]{FF0000} \textbf{.6725}} & {\color[HTML]{0070C0} \underline{.6871}} & {\color[HTML]{FF0000} \textbf{.6310}} & {\color[HTML]{FF0000} \textbf{.6965}} \\
\midrule
Backbone & \multicolumn{7}{c}{EASQ} \\
\cmidrule(lr){1-1} \cmidrule(lr){2-8}
Task & pctr & pvtr & plvtr & pltr & pwtr & pcmtr & pftr \\
\midrule
Base   & .6352 & .6615 & .6528 & .6693 & .6811 & .6621 & .6804 \\
GA     & .6347 & .6622 & .6519 & .6700 & .6815 & .6633 & .6807 \\
LSPR   & .6359 & .6619 & .6538 & .6704 & .6822 & .6630 & .6818 \\
PSL    & .6348 & .6639 & .6531 & .6711 & .6826 & .6640 & .6828 \\
SSM    & .6353 & .6663 & .6546 & {\color[HTML]{0070C0} \underline{.6741}} & .6833 & .6644 & {\color[HTML]{0070C0} \underline{.6850}} \\
DrEM:F & {\color[HTML]{0070C0} \underline{.6367}} & .6654 & .6531 & .6719 & .6825 & .6646 & .6832 \\
DrEM:S & .6344 & {\color[HTML]{0070C0} \underline{.6690}} & {\color[HTML]{0070C0} \underline{.6554}} & .6725 & {\color[HTML]{0070C0} \underline{.6847}} & {\color[HTML]{FF0000} \textbf{.6673}} & .6828 \\
DrEM   & {\color[HTML]{FF0000} \textbf{.6376}} & {\color[HTML]{FF0000} \textbf{.6742}} & {\color[HTML]{FF0000} \textbf{.6562}} & {\color[HTML]{FF0000} \textbf{.6749}} & {\color[HTML]{FF0000} \textbf{.6873}} & {\color[HTML]{0070C0} \underline{.6661}} & {\color[HTML]{FF0000} \textbf{.6879}} \\
\bottomrule
\end{tabular}%
\end{table}

\noindent\textbf{(1) Overall best performance and cross-backbone consistency.} Our
DrEM achieves the best or comparable performance on both EMER and EASQ.
Notably, the baselines exhibit inconsistent gains across backbones. For example, SSM improves pvtr on EASQ but degrades it on EMER, whereas our DrEM consistently outperforms Base on both.
This suggests that the baselines' effectiveness depends on the specific noise conditions of each backbone, while our DrEM addresses the fundamental issue of pxtr prediction noise and thus generalizes across backbones well as a plug-in module without modifying the backbone structure.

\noindent\textbf{(2) Unstable gains of single-side baselines.}
Neither the supervision-side methods (i.e., SSM, PSL) nor the feature-side methods (i.e., GA, LSPR) consistently outperform Base across all tasks.
For example, PSL improves pcmtr and pftr on EMER but degrades pctr and pvtr, while GA degrades pctr and pcmtr on EMER but improves pvtr and pftr.
This indicates that single-side methods are highly task-dependent, as the same intervention can help on some tasks but hurt on others depending on the specific noise characteristics of each task's pxtr.

\noindent\textbf{(3) Complementary roles of DrEM:S and DrEM:F.}
On the EMER backbone, DrEM:S yields larger gains on sparse tasks (e.g., pcmtr, pftr), while DrEM:F is more effective on dense tasks (e.g., pctr, pvtr).
This aligns with intuition, as sparse tasks have less reliable supervision so supervision-side correction is more valuable, while for dense tasks the supervision is already relatively reliable so feature-side stability becomes more impactful.
On EASQ, the pattern is less pronounced, which is likely because EASQ's questionnaire-signal alignment already reduces the supervision-side noise and thus narrows the gap between the two variants.

\subsection{Ablation Analysis w.r.t Perturbations (RQ2)}
\label{sec:ablation}

To answer RQ2, we conduct ablation experiments on both backbones, progressively removing the supervision-side robust loss and the feature-side consistency regularizer, and study the performance under three perturbation strengths $\alpha\in\{0.5,1.0,2.0\}$.
Figure~\ref{fig:ablation} shows the average GAUC of each variant at varying perturbation strengths, where we have the following observations:

\begin{figure}[t]
  \centering
  \includegraphics[width=1.0\columnwidth]{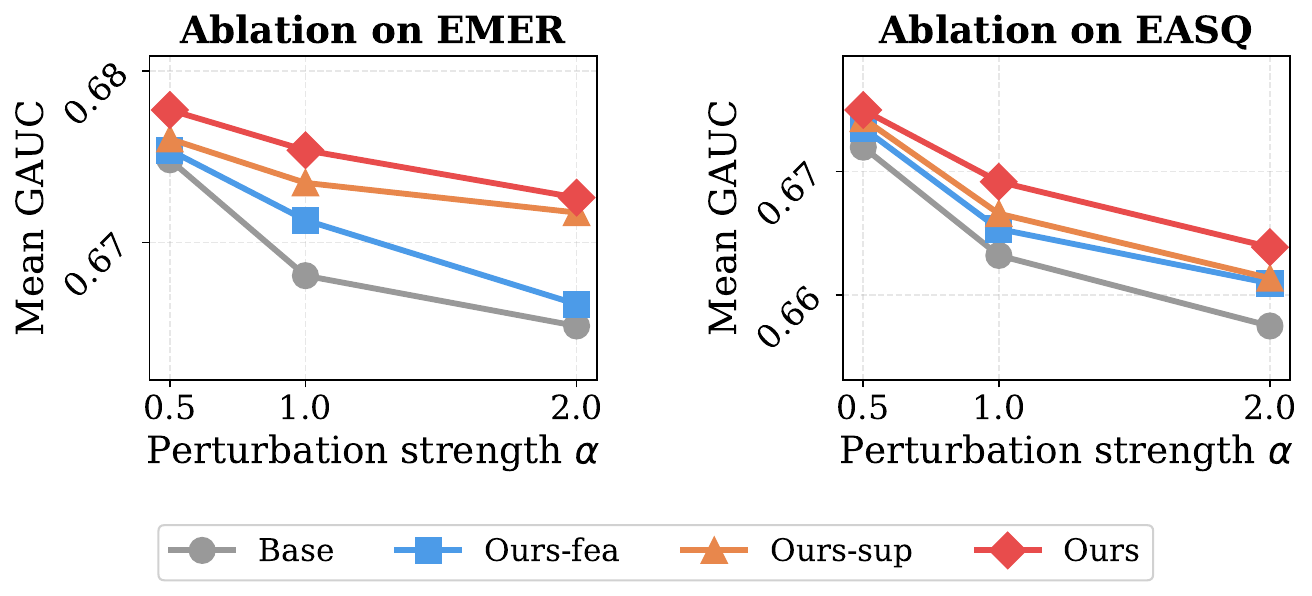}
  \caption{Ablation results. Average GAUC of Base, DrEM:S, DrEM:F, and DrEM under three perturbation strengths. Left: EMER backbone. Right: EASQ backbone.}
  \label{fig:ablation}
\end{figure}

\noindent\textbf{(1) Both sides contribute more as perturbation increases, but with different growth patterns.}
The gain from the supervision-side correction is limited at low perturbation but rises significantly as perturbation grows.
This is consistent with Theorem~\ref{thm:superiority}, as a higher flip probability widens the gap between the basic pairwise loss risk and the clean risk, giving the robust loss more room for recovery.
The feature-side regularizer also shows limited gain at low perturbation but becomes critical at high perturbation, indicating that it protects against input shifts that exceed the model's inherent tolerance.

\noindent\textbf{(2) The full method consistently outperforms each single-side variant, confirming the necessity of dual-side intervention.}
The same pxtr noise corrupts both the supervision signal and the input features, and each side of our DrEM handles one of these two effects, so their gains are additive rather than redundant.
As a result, the combination consistently outperforms each single-side variant across all perturbation levels, and the complementary effect persists even as both sides become effective.

\subsection{Online A/B Test (RQ3)}
\label{sec:online_ab}

To answer RQ3 and validate real-world performance, we conduct online A/B tests on the industrial short-video platform.
We set up two parallel experiment groups to verify the generalization of our DrEM on both representative backbones, applying our DrEM on top of production-grade EMER and EASQ backbone.
The experiment runs for 7 days with 5.1\% of the main traffic randomly split at the user level via hash-based assignment to each group.
The control group uses the online baseline model, and the experiment group uses the same baseline augmented with our DrEM.
We evaluate a comprehensive set of user satisfaction metrics covering long-term retention (LT7), implicit consumption behaviors, and explicit positive interactions.
As shown in Table~\ref{tab:online}, our DrEM achieves consistent positive gains over both baselines, with all improvements reaching statistical significance at $p<0.005$.

\begin{table}[t]
\centering
\caption{Online A/B test results. All $p$-values are below $0.005$.}
\label{tab:online}
\begin{tabular}{ccc}
\toprule
Metric & EMER w/ DrEM & EASQ w/ DrEM \\
\midrule
LT7            & $+0.017\%$ & $+0.020\%$ \\
App Stay Time  & $+0.116\%$ & $+0.124\%$ \\
Video View     & $+0.691\%$ & $+0.117\%$ \\
Long View      & $+0.401\%$ & $+0.043\%$ \\
Like           & $+0.625\%$ & $+0.048\%$ \\
Follow         & $+1.197\%$ & $+0.460\%$ \\
Forward        & $+0.683\%$ & $+0.181\%$ \\
Comment        & $+1.388\%$ & $+0.178\%$ \\
\bottomrule
\end{tabular}
\end{table}

Compared with strong production-grade baselines, our DrEM brings consistent and statistically significant improvements on several core online metrics, with two notable patterns:

\noindent\textbf{(1) Sparse behaviors improve more than dense behaviors.}
The gains on active interaction metrics (Follow $+1.197\%$, Comment $+1.388\%$) far exceed those on passive consumption metrics (Video View $+0.691\%$, Long View $+0.401\%$).
This is consistent with our theoretical analysis, since sparse-behavior tasks (e.g., pcmtr, pftr) have larger prediction variance, leading to higher flip probabilities $\varepsilon_{ij}$ in Eq.~(\ref{eq:risk}) and more severe risk deviation under the basic pairwise loss.
The robust correction in our DrEM therefore has greater room for recovery on these tasks.
The same pattern appears in the offline experiments, where DrEM:S yields larger gains on sparse tasks (e.g., pcmtr, pftr) in Table~\ref{tab:overall}, further confirming the value of flip-probability-driven adaptive correction for high-noise tasks.

\noindent\textbf{(2) The marginal gain over EASQ is smaller than over EMER.}
Except for Video View, all improvements relative to EASQ are smaller than those relative to EMER (e.g., Follow $+0.460\%$ vs.\ $+1.197\%$, Comment $+0.178\%$ vs.\ $+1.388\%$).
This does not indicate that DrEM is less effective on EASQ, since EASQ's questionnaire-signal alignment already partially mitigates the supervision-side noise and thus leaves less room for DrEM's supervision-side correction.

\subsection{Further Analysis (RQ4)}
\label{sec:further_analysis}

In this section, we analyze the intrinsic mechanism of our DrEM from the perspective of the stratified distribution of flip probabilities.
We partition the pairs into ten buckets by their estimated preference flip probabilities with a step size of $0.05$, i.e., $[0, 0.05)$, $[0.05, 0.10)$, $\dots$, $[0.45, 0.50)$, and report the GAUC improvement with DrEM over the Base model for each bucket and each pxtr task.
Figure~\ref{fig:flip_heatmap} visualizes the results as a heatmap, from which we have the following observations:

\begin{figure}[t]
  \centering
  \includegraphics[width=0.96\columnwidth]{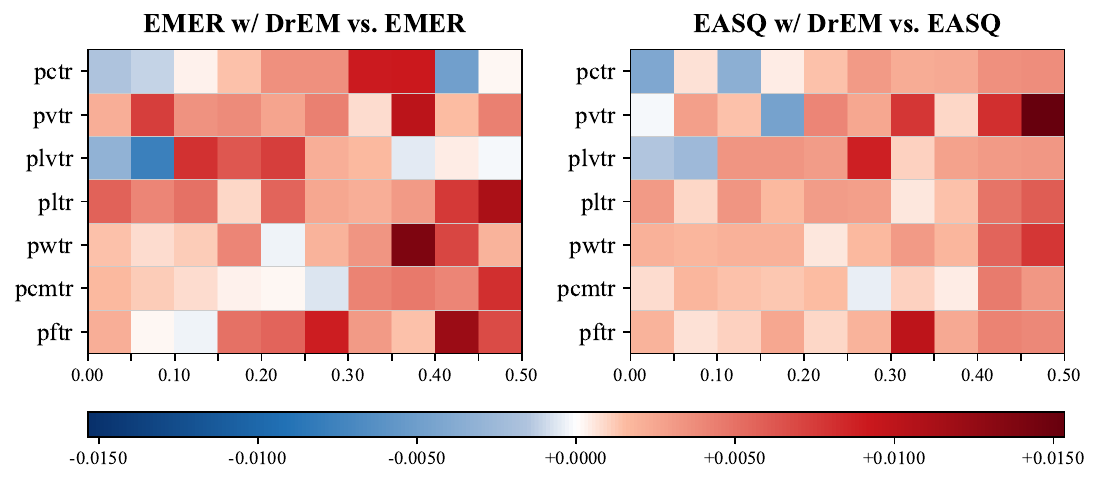}
  \caption{GAUC improvement (Base w/ DrEM vs. Base) stratified by estimated flip probability. Left: EMER backbone. Right: EASQ backbone.}
  \label{fig:flip_heatmap}
\end{figure}

\noindent\textbf{(1) Overall trend, improvement increases with flip probability.}
Across both backbones, the heatmap exhibits a clear left-cold-right-warm pattern, where the GAUC improvement is small or even slightly negative at low flip probabilities and grows as the flip probability increases.
This is consistent with Theorem~\ref{thm:superiority}, as the basic pairwise loss risk deviates from the clean risk by an amount proportional to the flip probability $\varepsilon_{ij}$, so the robust loss has more room for recovery when $\varepsilon_{ij}$ is large.
On the EMER backbone, pctr and plvtr show slightly negative values in the highest flip-probability buckets (e.g., pctr at $[0.40, 0.45)$ and plvtr at $[0.45, 0.50)$).
This is attributable to the extreme scarcity of samples in these buckets, as pairs with such high flip probabilities are exceedingly rare for dense-behavior tasks, and the resulting GAUC estimates are dominated by statistical noise rather than a systematic degradation.

\noindent\textbf{(2) Dense and sparse tasks exhibit different gain profiles.}
For dense-behavior tasks (e.g., pctr, pvtr), the improvement is close to zero or slightly negative at low flip probabilities and rises sharply toward the high end.
For sparse-behavior tasks (e.g., pcmtr, pftr), the improvement is already positive at low flip probabilities and increases more gradually.
This is because sparse-task pxtrs carry higher prediction variance, so even at low flip probabilities the supervision is noisier and the model output is less stable, giving our method more room for correction.
For dense tasks with low variance, the supervision is already reliable at low flip probabilities, leaving little to correct, and the method only becomes beneficial when the flip probability grows large enough to cause noticeable label corruption.
\section{Conclusions}
\label{sec:conclusion}

We propose our DrEM, a dual-side robust ensemble ranking framework that addresses pxtr prediction noise through a shared noise model.
On the supervision side, a risk-denoising robust pairwise loss corrects the empirical risk using estimated pairwise preference flip probabilities. On the feature side, a preference-preserving ranking consistency regularizer stabilizes outputs with item-specific perturbations.
Both components are derived from a shared noise model, aligning dual-side correction targets to the same noise source and synchronizing correction via shared parameters.
Online A/B tests further validate the real-world effectiveness of our DrEM, which brings consistent and statistically significant improvements on core online metrics over production-grade baselines.

\bibliographystyle{ACM-Reference-Format}
\bibliography{references}

\appendix
\appendix

\section{Details of Baselines}
\label{app:baselines}

\noindent\textbf{EMER} \cite{He2025EMER} is an end-to-end framework that models interactions among candidate items via a Transformer and addresses the absence of satisfaction labels through a tailored loss.

\noindent\textbf{EASQ} \cite{Li2026EASQ} aligns item ranking with sparse satisfaction questionnaires through multi-task learning and LoRA while separating questionnaire supervision from dense behavioral signals.

\noindent\textbf{SSM} \cite{Wu2024SSM} links sampled softmax loss to popularity debiasing, hard-negative mining, ranking-metric optimization, and robustness to negative-sample distribution shifts.

\noindent\textbf{PSL} \cite{Yang2024PSL} replaces the exponential function in pairwise softmax loss with alternative activations and is equivalent to distributionally robust BPR under negative-sample distribution shifts.

\noindent\textbf{GaussAug (GA)} \cite{Wang2023Gaussian} generates additional samples from a Gaussian distribution and adversarially selects worst-case augmentations, equivalent to gradient regularization.

\noindent\textbf{LSPR} \cite{Ramazanli2024LSPR} applies Gaussian perturbations to input features and uses perturbed samples as downweighted auxiliary supervision, emphasizing correct predictions rather than output consistency.

\section{Task Details}
\label{app:task_details}

\noindent\textbf{pctr}: predicted click-through rate.

\noindent\textbf{pvtr}: predicted view-through rate.

\noindent\textbf{plvtr}: predicted long-view-through rate.

\noindent\textbf{pltr}: predicted like-through rate.

\noindent\textbf{pwtr}: predicted follow-through rate.

\noindent\textbf{pcmtr}: predicted comment-through rate.

\noindent\textbf{pftr}: predicted forward-through rate.

\end{document}